\documentclass[10pt,twocolumn,twoside]{IEEEtran}
\IEEEoverridecommandlockouts

\usepackage{cite}
\usepackage{amsmath,amssymb,amsfonts}
\usepackage{algorithmic}
\usepackage{graphicx}
\usepackage{textcomp}
\usepackage{balance}
\usepackage[dvipsnames]{xcolor}
\usepackage[utf8]{inputenc}
\usepackage[most]{tcolorbox} % for tcolorbox, breakable, enhanced
\usepackage{enumitem}        % for customized enumerate labels (label=\arabic*., \alph*.)
\usepackage{xcolor}          % color support (usually loaded automatically, but safe)
\usepackage[breaklinks,colorlinks, linkcolor=MidnightBlue, anchorcolor=MidnightBlue, citecolor=MidnightBlue, urlcolor=MidnightBlue]{hyperref}
\usepackage[hyphenbreaks]{breakurl}
\usepackage{cite}
\usepackage{xurl} 
\usepackage{bbm}

\newtheorem{definition}{Definition}
\newtheorem{proposition}{Proposition}% 
\newtheorem{proof}{Proof}
\def\BibTeX{{\rm B\kern-.05em{\sc i\kern-.025em b}\kern-.08em
    T\kern-.1667em\lower.7ex\hbox{E}\kern-.125emX}}
\usepackage{lipsum}

\title{\LARGE \bf 
Behavioral Generative Agents for Power Dispatch and Auction 
}
\author{Shaoze Li$^\dagger$, Justin S. Kim$^\dagger$, Cong Chen$^\dagger$% <-this % stops a space
\vspace{-0.5cm}
\thanks{\scriptsize
This work is under review. ~\textit{(Corresponding author: Cong Chen.)}  }
 \thanks{ \scriptsize
$^\dagger$ Authors are with the Thayer School of Engineering, Dartmouth College, Hanover, NH, USA. Email: {\tt Cong.Chen@dartmouth.edu} {\tt Shaoze.Li.TH@dartmouth.edu} {\tt Justin.S.Kim.TH@dartmouth.edu}. The research was supported by Amazon Research Award (Spring 2025), and OpenAI Researcher Access Program. Any opinions, findings, and conclusions or recommendations expressed in this material are those of the author(s) and do not reflect the views of funding agencies.}
}

\begin{document}
\begingroup
\allowdisplaybreaks

\maketitle

\begin{abstract}
%Designing power dispatch and market mechanisms that accurately reflect diverse customer preferences remains a critical challenge for future energy systems. This paper provides positive initial evidence for the feasibility of large language model (LLM)–powered generative agents in power dispatch and auction applications through two proof-of-concept experiments. First, we construct a home battery management testbed with stochastic prices and blackout interventions, and benchmark LLM decisions against dynamic programming and greedy policies. By incorporating an in-context learning (ICL) module, we show that behavioral patterns extracted from a stronger reasoning model can be transferred to a smaller LLM via example-based prompting, leading agents to prioritize post-blackout energy reserves over short-term profit. Furthermore, in auction environments, agents equipped with structured reasoning frameworks exhibit complex behavioral heterogeneity, evolving from myopic rule-followers to aggressive, strategic bidders aiming for long-term dominance. Our findings indicate that behavioral generative agents offer a vital bridge between theoretical optimization and practical implementation, serving as a scalable testbed for evaluating customer-centric energy policies and market designs.
This paper presents positive initial evidence that generative agents can relax the rigidity of traditional mathematical models for human decision-making in power dispatch and auction settings. We design two proof-of-concept energy experiments with generative agents powered by a large language model (LLM). First, we construct a home battery management testbed with stochastic electricity prices and blackout interventions, and benchmark LLM decisions against dynamic programming. By incorporating an in-context learning (ICL) module, we show that behavioral patterns discovered by a stronger reasoning model can be transferred to a smaller LLM via example-based prompting, leading agents to prioritize post-blackout energy reserves over short-term profit. Second, we study LLM agents in simultaneous ascending auctions (SAA) for power network access, comparing their behavior with an optimization benchmark, the straightforward bidding strategy. By designing ICL prompts with rule-based, myopic, and strategic objectives, we find that structured prompting combined with ICL enables LLM agents to both reproduce economically rational strategies and exhibit systematic behavioral deviations. Overall, these results suggest that LLM-powered agents provide a flexible and expressive testbed for modeling human decision-making in power system applications.

\end{abstract}

\begin{IEEEkeywords}
Generative Agents, Energy Management Systems, Simultaneous Ascending Auction, Dynamic Programming, In-Context Learning
\end{IEEEkeywords}

%%%%%%%%%%%%%%%%%%%%%%%%%%%%%%%%%%%%%%%%%%%%%%%%%%%%%%%%%%%%%%%%%%%%%%%%%%%%%%%%
\section{Introduction}

Recent advances in generative agents powered by large language models (LLMs) have sparked growing interest in their potential to simulate human decision-making (“homo silicus”) across a wide range of applications, including education, experimental design, and socio-technical systems \cite{park25GenerativeAgents}. Despite this promise, current LLMs exhibit important limitations in real-world settings, such as hallucinations, prompt misinterpretation, and output variability. These shortcomings constrain the use of LLM-based agents as faithful human simulators. Nevertheless, rapid progress in LLM capabilities motivates systematic exploration of their practical value under carefully designed experimental settings.

In this work, we investigate how LLM-based agents can be leveraged to (1) represent diverse behavioral personas and (2) contribute to decision-making in power system dispatch and auction mechanisms. Rather than aiming for fully realistic human simulations—which remain challenging given current LLM limitations—we focus on simplified yet representative experiments that reduce ambiguity while still capturing meaningful decision complexity. This approach allows us to evaluate whether LLM agents can demonstrate nontrivial decision sophistication in energy system contexts where human behavior plays a critical role.

We develop a behavioral generative agent testbed and benchmark LLM-based decisions against classical mathematical decision-making models. Through this comparison, we assess the extent to which LLM agents can explore strategy spaces that are difficult to represent using rigid optimization frameworks. We present two proof-of-concept experiments—one in power dispatch and one in auctions—to illustrate the potential of LLM agents as an intermediate evaluation tool between purely analytical models and costly human-subject experiments.

Traditionally, energy operation models rely on optimization frameworks that assume fully rational agents maximizing a well-defined objective, such as cost minimization or utility maximization. In practice, however, energy customers and operators often exhibit heterogeneous preferences, bounded rationality, and context-dependent behavior. Recent advances in generative AI offer a promising pathway to incorporate such behavioral diversity into energy system analysis. By conditioning LLMs on different personas, preferences, and decision histories, generative agents can approximate a broader range of human decision-making patterns. Related work can be broadly categorized into two strands.

\paragraph{Generative agents for auctions and strategic operations}
A growing literature treats LLMs as simulated economic agents capable of participating in repeated strategic environments, including auction-like settings, through natural-language interaction \cite{horton2023large,aher2023using,ArgyleEtAl2023OutOfOneMany}. These agents often display systematic deviations from strict utility maximization, enabling the study of behavioral biases relevant to strategic choice, such as mental accounting and distorted risk preferences \cite{RossKimLo2024LLMEconomicus,Leng2024MentalAccounting}. LLM agents have been evaluated in repeated games and multi-agent benchmarks closely related to auction dynamics, revealing both promise and consistent limitations in strategic reasoning \cite{akata2025playing,fontana2025nicer,duan2024gtbench}. A key methodological insight is that LLMs are stateless: their behavior depends critically on how the evolving game state and history are encoded in the prompt. Prior work shows that alternative textual state representations can substantially alter learning dynamics and outcomes \cite{GoodyearGuoJohari2025StateRep}. As a result, in-context learning (ICL)—conditioning on examples, past interactions, and feedback without parameter updates—plays a central role in enabling adaptive behavior \cite{brown2020language,wei2022chain}. In this setting, prompt design functions as an implicit learning mechanism rather than a mere interface.

\paragraph{Generative agents for power dispatch and auctions}
In contrast, the literature on energy systems has traditionally emphasized optimization-based modeling and equilibrium analysis, often relying on simplified or calibrated behavioral assumptions \cite{AgrawalYucel2022DRPrograms}. Recent work in operations research examines the  strategic behavior of distributed energy resources (DERs) and market design implications in modern power systems \cite{GaoAlshehriBirge2024DERMarketPower}. Bridging these perspectives, Chen et al.\ introduce behavioral generative agents for energy operations, using LLM prompting to emulate heterogeneous consumer personas and generate both decisions and natural-language rationales under dynamic prices and rare events \cite{Chen25behavioral}. This direction aligns with the broader homo silicus view of artificial agents as computational proxies for economic behavior \cite{FilippasHortonManning2024HomoSilicusEC} and with general-purpose agent frameworks that operationalize preferences and narratives through prompts \cite{ShanahanMcDonellReynolds2023RolePlay}.

% Crucially, these energy-agent studies often \emph{do use ICL}: the policy is induced by conditioning the LLM on (i) the current operational state (e.g., prices, storage state-of-charge, constraints), (ii) selectively summarized histories of prior days, and (iii) decision rubrics/examples, without any parameter updates—making context-window construction and summarization a first-order modeling choice \cite{GoodyearGuoJohari2025StateRep,LiuEtAl2024LostInMiddle,BrownEtAl2020FewShot}.

Building on this literature, this paper provides positive initial evidence for the feasibility of LLM-powered behavioral generative agents in energy system applications through two proof-of-concept studies.

First, we develop a simplified home battery management testbed with stochastic electricity prices and blackout interventions, and benchmark LLM decisions against dynamic programming (DP) and greedy policies. By extending the agent framework with an ICL module, we show that behavioral patterns discovered by a stronger reasoning model can be transferred to a smaller LLM through example-based prompting. In particular, when provided with blackout-related ICL examples, the LLM agent learns to fully discharge the battery during blackout events and retain higher energy reserves afterward, deviating from purely profit-maximizing DP behavior. These effects are more pronounced for personas that prioritize reliability over revenue.

Second, we explore the use of LLM agents to simulate bidding behavior in power network access auctions, relevant for distributed energy resources and data centers seeking grid withdrawal rights. Based on the simultaneous ascending auctions (SAAs), we evaluate a \emph{Straightforward Bidding Strategy} as a baseline against three distinct LLM-based agents: a \emph{Rule-Centric agent} adhering to fundamental auction rules; a \emph{Myopic-Profit agent} focused on immediate next round profit; and a \emph{Strategic-Outcome agent} aiming for long-term auction gain. From the simulation results, we drive some interesting results: The bidding behavior of the \emph{Myopic-Profit Agent} aligns closely with the \emph{Straightforward Bidding Strategy}, since they have the same bidding objective, and the \emph{Strategic-Outcome Agent} exhibits a more aggressive bidding profile relative to both the \emph{Myopic-Profit Agent} and the \emph{Straightforward Bidding Strategy}, since the agent utilizes higher bids in earlier rounds to secure a dominant position for its desired products. Our comparative analysis demonstrates that incorporating structured Chain-of-Thought (CoT) and Thought–Action–Reflection–Journal (TARJ) prompting with ICL modules enables LLM agents to exhibit both higher rationality and behavioral biases in SAA, depending on specific objectives and conditions for each agent. 

% effectively mitigating over-aggressive bidding and improving overall market efficiency.}

Overall, our results suggest that LLM-powered generative agents offer a promising testbed for simulating human decision-making in power dispatch and auction mechanisms. While our experiments are intentionally simplified and remain at a proof-of-concept stage, they highlight the potential of LLM agents to support customer-centric energy management design, energy policy evaluation using population-level human simulators, and future human–AI collaboration in power system operations.

\section{Dispatch Behavior for Battery Operations}

 \begin{figure*}[ht!]
% \vspace*{-10 pt}%
\centering
\includegraphics[width=120mm]{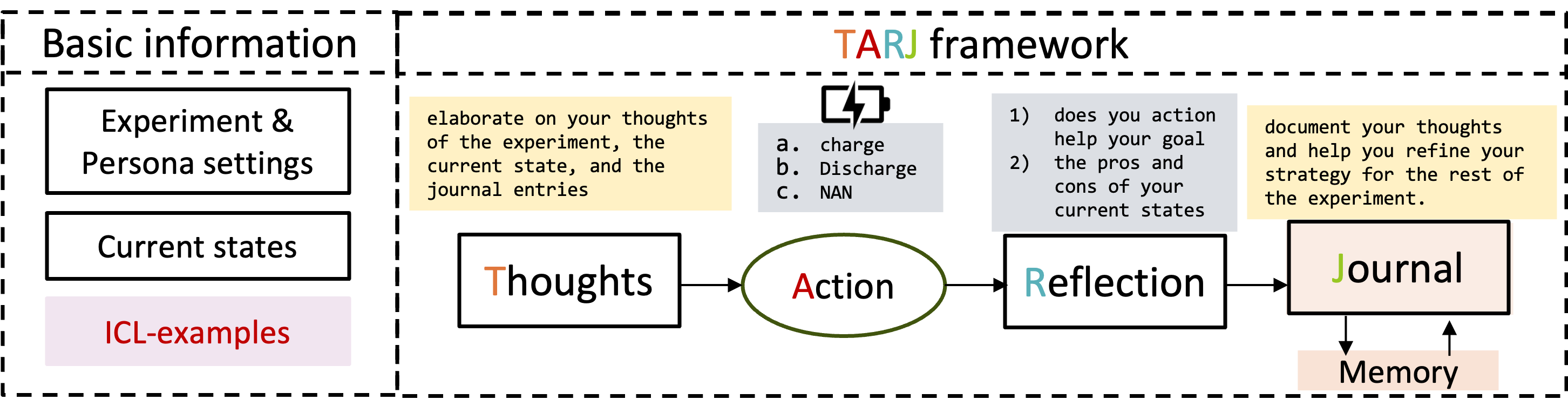}
\caption{LLM prompt design using a Thought-Action-Reflection-Journal (TARJ), augmented by in-context learning (ICL) examples.}
 \label{fig:TARJ-ICL}
\vspace*{-10 pt}%
\end{figure*} 
%  \vspace{-2em}
We study a simplified home battery management problem in which daily electricity prices fluctuate stochastically and charging or discharging decisions must be made under uncertainty. We benchmark LLM-based behavioral agents against an optimal DP policy and a heuristic greedy policy. Our objective is to evaluate whether LLM agents, conditioned on different personas and in-context examples, exhibit meaningful and interpretable deviations from rational optimization benchmarks.
 
To ensure tractability and reduce ambiguity for LLM-based decision-making, we model battery operations at a daily resolution with discrete actions—charge, discharge, or hold as shown among the action options in Fig.\ref{fig:TARJ-ICL}. Electricity prices are driven by simplified stochastic processes. %These abstractions allow direct comparison with DP-based optimal policies while retaining key dynamic power dispatch features under price uncertainties.

%We adopt a stochastic simulation framework with ICL and memory to guide LLM decisions. The believability of LLM-based behavioral agents is motivated by prior work on generative agents with memory and chain-of-thought (CoT) reasoning \cite{park25GenerativeAgents}.

\subsection{Agent decision dynamics with blackout intervention}

We extend the TARJ behavioral generative agent framework proposed in \cite{Chen25behavioral} by introducing an explicit ICL module. The overall prompt structure is illustrated in Fig.~\ref{fig:TARJ-ICL}. The prompt consists of two components: (i) basic information, including experiment settings, persona definitions, current electricity price, and battery state-of-charge (SoC); and (ii) the TARJ framework, which provides structured reasoning guidance.
  
Within TARJ, the LLM is prompted to first reason about the decision (Thought), select a battery action (Action), analyze the outcome (Reflection), and summarize insights for future decisions (Journal). Journal entries are stored in memory and dynamically incorporated into subsequent prompts. This memory mechanism enables the agent to adapt behavior over time, consistent with prior findings on generative agents \cite{park25GenerativeAgents}.
 
The simulation proceeds sequentially over time, repeatedly invoking the TARJ–ICL prompt in Fig.~\ref{fig:TARJ-ICL} while updating memories in the dynamic power dispatch experiment. This dynamic testbed enables controlled interventions, including a one-day blackout event. Such low-frequency, high-impact events are difficult to study using traditional human-subject experiments but can be naturally incorporated into LLM simulations.

\subsection{In-context learning for behavior discovery}\label{sec:ICL}

In-context learning (ICL) refers to an LLM’s ability to adapt its behavior based on the examples provided in the prompt, without updating the model parameters. In our ICL module (Fig.~\ref{fig:TARJ-ICL}), we provide example power dispatch scenarios to guide behavioral patterns in the prompt.

A key design choice is to use decisions generated by a more advanced reasoning model (o1-preview) during blackout interventions as ICL examples for a smaller and cheaper model (gpt-5-mini). We find that behavioral patterns discovered by o1-preview—such as prioritizing energy reserves after blackout exposure—can be effectively transferred to gpt-5-mini through ICL. This transfer is enabled by the linguistic representation of reasoning, which is easier for LLMs to imitate than numerical policies. Detailed quantitative results are reported in the simulation result from Fig.~\ref{fig:SoCall}.

\subsection{Dynamic program policies without blackout intervention}

When blackout interventions are excluded, the battery dispatch problem admits a standard rational optimization formulation. We consider two mathematical benchmarks: an optimal DP policy and a heuristic greedy policy.

An ideal rational consumer solves the following stochastic optimization problem:
\begin{equation}\label{eq:expectedU}
    \max_{u_t \in \mathcal{U}} ~\mathbb{E}\left[\sum_{t=1}^{T} \pi_t u_t \right] \quad \text{s.t.}~  s_{t+1} = s_{t} - u_t, ~\underline{s} \leq s_{t} \leq \overline{s},~\forall t \in [T].
\end{equation}
Here, $\pi_t \in \mathbb{R}$ denotes the electricity price, $u_t \in \mathbb{R}$ the battery action (positive for discharging, negative for charging), and $s_t$ the battery SoC at time $t$. The parameters $\underline{s}$ and $\overline{s}$ represent SoC limits. This DP benchmark provides the optimal arbitrage policy under rational expectations.

Although exact DP solutions can be computationally intensive \cite{bertsekas2012dynamic}, they serve as a gold standard. We also implement a greedy heuristic that charges at low prices and discharges at high prices \cite{JiangPowell15Storage}. This heuristic reflects bounded rationality and is expected to be suboptimal relative to DP.

\subsection{Experimental Settings}

We simulate a 20-day home battery operation scenario with the following settings: 1) energy customer can take daily action to charge the battery by 1 kWh (buy energy), discharge the battery by 1 kWh (sell energy) to the grid, Do nothing; 2) the battery can hold up to 10 kWh of charge; 3) The daily price is simplified to be either \$10/kWh or \$5/kWh, with a 50\% chance for each value; 4) At the end of the 20-day experiment, the battery will be removed, and any unused energy left in the battery will not be compensated.

Each day, gpt-5-mini is queried to select an action based on the current price, SoC, ICL examples, and memory. ICL examples are generated by o1-preview, while all simulations are executed using gpt-5-mini. On day 10, a one-day blackout intervention is introduced, during which the battery may fully discharge to serve household demand.

We consider three personas—Thinker, Realist, and Feeler—with distinct preferences. Persona definitions and full prompt details are provided in Appendix~\ref{sec:batteryexperiment}. For each persona, we run 40 Monte Carlo simulations to capture both price uncertainty and LLM output variability.

We compare LLM behaviors against the DP and greedy benchmarks, shown as dashed lines in Fig.~\ref{fig:SoCall}. Deviations from the DP policy reveal behavioral preferences not captured by rational optimization.

\begin{figure}
    \centering
    \includegraphics[width=0.45\linewidth]{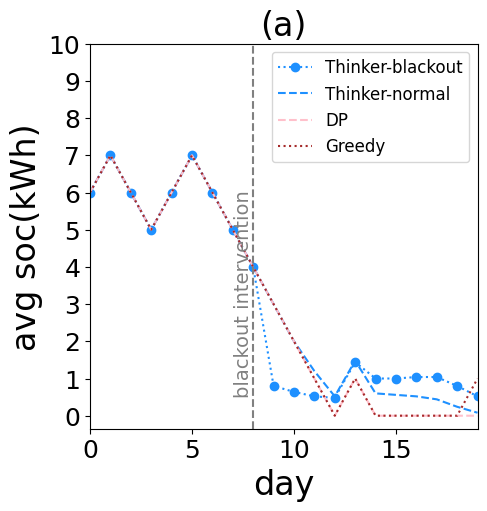}\includegraphics[width=0.45\linewidth]{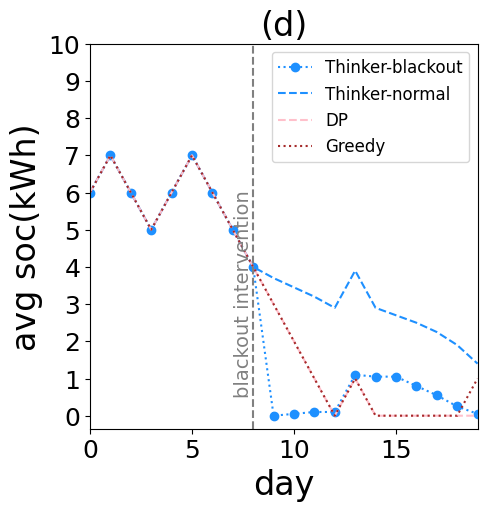}
    \includegraphics[width=0.45\linewidth]{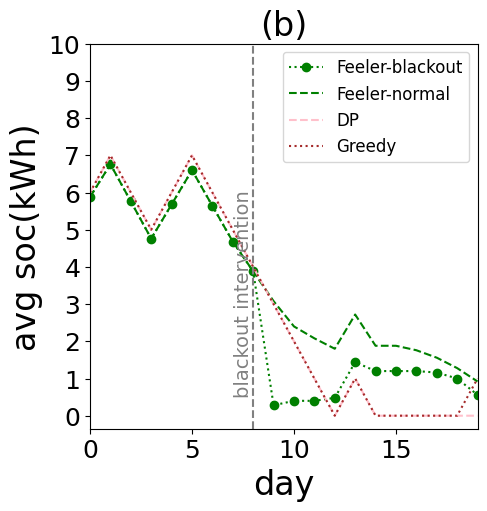}\includegraphics[width=0.45\linewidth]{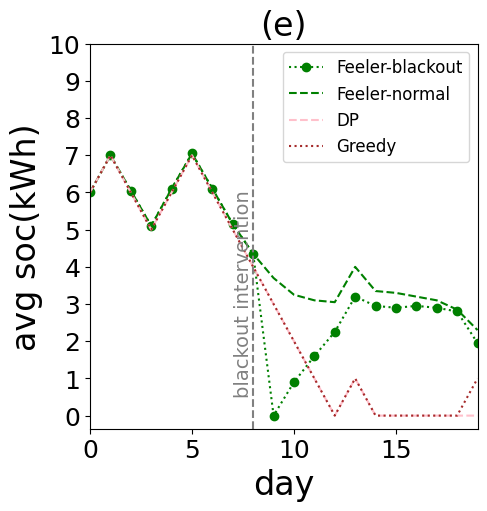}
    \includegraphics[width=0.45\linewidth]{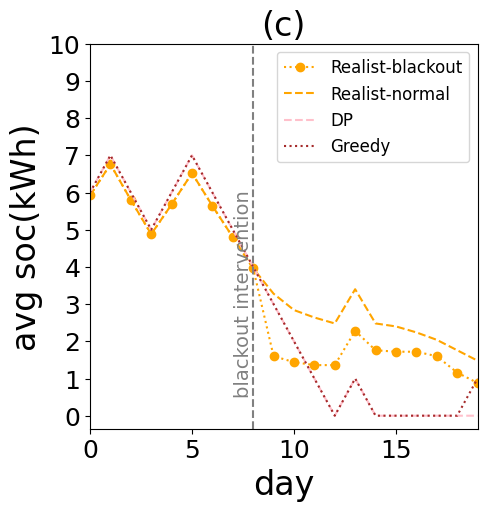}\includegraphics[width=0.45\linewidth]{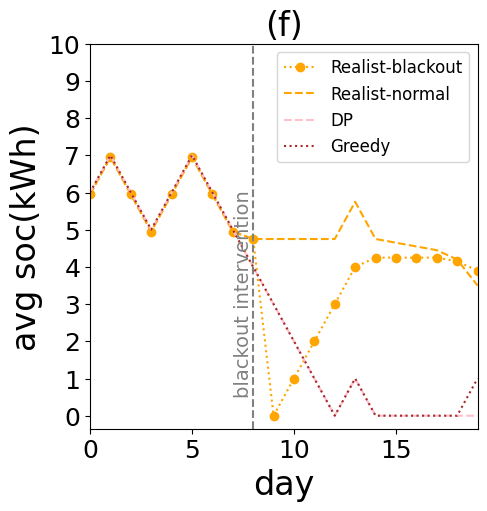}
    \caption{Comparing SoC resulted from LLM decision during blackout. Left: without IC examples. Right: with ICL-blackout examples.}
    \label{fig:SoCall}
      \vspace{-1.6em}
\end{figure}

 \subsection{Experimental Results} \label{sec:simbattery}
 Our main finding is that gpt-5-mini successfully learns blackout-related behavioral patterns through ICL examples generated by o1-preview.

First, after being prompted with blackout ICL examples, gpt-5-mini simulates Feeler and Realist personas that retain higher terminal SoC, reflecting a preference for backup energy reserves. Without ICL examples (left column of Fig.~\ref{fig:SoCall}), terminal SoC is generally low, consistent with profit-maximizing behavior under \eqref{eq:expectedU}. With ICL examples (right column), terminal SoC increases for these personas, matching the behavior discovered by o1-preview (Fig.~\ref{fig:SoCo1} in the Appendix).

Second, gpt-5-mini learns to fully discharge the battery during the blackout event across all personas. Without ICL examples, average SoC during the blackout day remains positive. With ICL examples, average SoC drops to zero, consistent with o1-preview behavior.

We also observe increased behavioral heterogeneity across personas when ICL examples are provided. This suggests that diverse ICL examples improve the ability of smaller models to represent distinct personas more clearly.

Finally, we note limitations of ICL-based transfer. While gpt-5-mini captures qualitative patterns from o1-preview, quantitative differences remain. For example, o1-preview typically produces lower terminal rewards and higher terminal SoC for the Feeler persona, whereas gpt-5-mini shows the opposite trend, albeit with small differences. This highlights the bounds of behavior transfer via ICL.

\section{Bidding Behavior for Power Auctions}

Power auctions encompass human bidding decisions with diverse objectives and information processing procedures. In this study, we specifically examine power network access auctions, where data centers compete for grid withdrawal rights, to evaluate the performance of LLM agents in complex market environments. We introduce three distinct agent architectures, each possessing different levels of informational and strategic awareness: \emph{Rule-Centric Agent, Myopic-Profit Agent, Strategic-Outcome Agent}. The performance of these agents is benchmarked against a classical \textit{Straightforward Bidding}  strategy.
Our objective
is to evaluate whether LLM agents will perform rationally and strategically.

The remainder of this section is organized as follows: first, we define the mechanisms of the SAA and the governing rules of the \textit{Straightforward Bidding}  strategy. Second, we detail the internal logic and prompting frameworks for our three LLM-based agents. Finally, we present simulation results comparing the LLM-derived bidding strategies against the baseline to demonstrate their comparative rationality and efficiency.

% \section{Organizational Behavior for Power Auction}

\subsection{Power Network Access Right Auction Mechanism}
We model the allocation of withdrawal capacity as a SAA. Let $\mathcal{N} = \{1, \dots, N\}$ denote the set of bidders (technology firms) and $\mathcal{K} = \{1, \dots, K\}$ denote the set of items, where each item corresponds to a specific withdrawal capacity at a candidate location. The auction proceeds in discrete rounds indexed by $t = 1, 2, \dots$.

\subsubsection{Auction Rules and Price Updates}
At each round $t$, the auctioneer posts the standing high bid $H_{k, t-1} \in \mathbb{R}$ and the current standing high bidder $w_{k, t-1} \in \mathcal{N}$ for each item $k \in \mathcal{K}$. These are common information for all bidders. 

To participate in round $t$, a bidder $n$ must submit a sealed bid $b^{(n)}_{k,t}$ that meets the minimum bid requirement. Crucially, this requirement depends on whether the bidder currently holds the standing high bid for the item. 

% \textcolor{red}{[Cong: Try this $r_{k,t}^{(n)} = H_{k, t-1} + \delta_k\mathbbm{1}{\{n \neq w_{k, t-1}\}}$. Try this {\em minimum bidding price}]\ }
\begin{definition}
The minimum bidding price for bidder $n$ on item $k$, denoted by $r_{k,t}^n$, is defined as:

\begin{equation}
   r_{k,t}^{(n)} = H_{k, t-1} + \delta_k\mathbbm{1}{\{n \neq w_{k, t-1}\}}
\end{equation}
where $\delta_k > 0$ is a predetermined bid increment and $\mathbbm{1}\{\cdot\}$ is an indicator function.    
\end{definition}
This minimum bidding price allows the standing high bidder to maintain their position at the current price, while challenging bidders must increase the price to displace the incumbent.

The standing high bid for item $k$ at the end of round $t$ updates according to the highest eligible new bid:
\begin{equation}
    H_{k,t} = \max\left(H_{k, t-1}, \max_{n \in \mathcal{N}} \{b^{(n)}_{k,t}\}\right).
\end{equation}
If no new bids are placed on item $k$, the standing high bid remains unchanged, and the item is temporarily retained by the previous high bidder (i.e., $w_{k,t} = w_{k, t-1}$).

While the classic SAA framework typically incorporates \emph{Eligibility and Activity Rules} to ensure consistent bidding throughout the process, we have omitted these constraints in the primary study to simplify the decision space for the LLM agents. This simplification allows for a clearer evaluation of the agents' core strategic logic and reduces the potential for hallucination by excessive constraints.  We will consider these rules in future work

% For the sake of theoretical completeness, the full introduction about these rules is detailed in the Appendix.

\subsubsection{Straightforward Bidding Strategy}
We assume that bidders employ a \textit{Straightforward Bidding}  strategy. Under \textit{Straightforward Bidding}  strategy, a bidder chooses the maximal profits for the next round, ignoring the potential strategic impact of their bids on future prices. At each round $t$, bidder $n$ observes the current minimum bidding prices vector $\mathbf{r}^{(n)}_t := \{r^{(n)}_{1,t}, \dots, r^{(n)}_{K,t}\}$.

Let $v_{n,k}$ be the private valuation of bidder $n$ for location $k$. The bidder selects a bundle of items $S_{n,t}^*$ that maximizes their surplus subject to its minimum bidding prices:
\begin{equation}\label{p1}
    \max_{S \subseteq \mathcal{K}, b^{(n)}_{k,t}} \sum_{k \in S} (v_{n,k} - b^{(n)}_{k,t}) \quad \text{s.t.} \quad b^{(n)}_{k,t}\ge r^{(n)}_{k,t}, \forall k\in\mathcal{K}.
\end{equation}
% \textcolor{red}{[Cong: I suggest to write it as proposition rather than theorem. Only put a theorem if it's very difficult to prove. If you didn't use abbreviation SB anywhere else in the paper, replace Straightforward Bidding (SB) with Straightforward Bidding. ]}
\begin{proposition}
   Problem (\ref{p1}) is equivalent to the following  problem:
   \begin{equation}\label{p2}
    \max_{S \subseteq \mathcal{K}} \sum_{k \in S} (v_{n,k} - r^{(n)}_{k,t}).
\end{equation}
\end{proposition}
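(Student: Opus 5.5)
The plan is to show that in Problem (\ref{p1}) the inner maximization over bids can be solved in closed form for any fixed bundle $S$. This reduces (\ref{p1}) to (\ref{p2}) with the same optimal value and the same optimal bundles. I would write (\ref{p1}) as a nested problem,
\begin{equation*}
\max_{S\subseteq\mathcal{K}} \; \Phi(S), \qquad \Phi(S) := \sup\Big\{ \sum_{k\in S} (v_{n,k} - b^{(n)}_{k,t}) \;:\; b^{(n)}_{k,t}\ge r^{(n)}_{k,t},\ \forall k\in\mathcal{K} \Big\}.
\end{equation*}
This rewriting is legitimate because a joint maximum over $(S,b)$ equals the iterated maximum, first over $b$ for fixed $S$ and then over $S$.

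\textbf{Step 1: solve the inner problem.} Fix $S$. The objective is separable across items. Bids $b^{(n)}_{k,t}$ with $k\notin S$ do not appear in the objective, so they can be set to any feasible value, for instance $r^{(n)}_{k,t}$, without affecting $\Phi(S)$. For $k\in S$, the term $v_{n,k}-b^{(n)}_{k,t}$ is strictly decreasing in $b^{(n)}_{k,t}$, and the only constraint on that bid is $b^{(n)}_{k,t}\ge r^{(n)}_{k,t}$. The supremum is therefore attained at $b^{(n)}_{k,t}=r^{(n)}_{k,t}$, and it is the unique maximizer for that coordinate. Hence
\begin{equation*}
\Phi(S)=\sum_{k\in S}(v_{n,k}-r^{(n)}_{k,t}),
\end{equation*}
and the supremum is in fact a maximum, so no issues of non-attainment arise.

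\textbf{Step 2: transfer optimality.} Substituting $\Phi$ into the outer problem gives exactly (\ref{p2}). I would then state the equivalence in both directions:
\begin{itemize}
\item If $(S^*,b^*)$ solves (\ref{p1}), then $S^*$ solves (\ref{p2}), and $b^{*}_{k}=r^{(n)}_{k,t}$ for every $k\in S^*$.
\item Conversely, if $S^*$ solves (\ref{p2}), then $S^*$ together with the bids $b_k=r^{(n)}_{k,t}$ for all $k$ solves (\ref{p1}).
\end{itemize}
Both claims follow directly from the iterated-max identity and Step 1, and the two optimal values coincide.

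\textbf{Main obstacle.} There is essentially no technical difficulty here. The only point needing care is precision about what "equivalent" means: equal optimal values, identical sets of optimal bundles, and the induced optimal bids. I would also remark on the feasible set. Since $\mathcal{K}$ is finite, the outer maximum over $S$ exists, and in (\ref{p2}) an optimal bundle is simply $S^*=\{k: v_{n,k}-r^{(n)}_{k,t}>0\}$, with items of zero surplus optionally included. This gives a useful explicit form of the Straightforward Bidding rule.
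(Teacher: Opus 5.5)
Your proposal is correct and follows the same approach as the paper. For a fixed bundle $S$, the surplus $v_{n,k}-b^{(n)}_{k,t}$ is strictly decreasing in each bid, so the optimal bid is the minimum $r^{(n)}_{k,t}$, and substituting it gives the bundle-selection problem. Your treatment of the iterated maximum, both directions of the equivalence, and the unconstrained bids for $k\notin S$ spells out what the paper's short argument leaves implicit.
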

\begin{proof}
To maximize the surplus subject to the constraint $b^{(n)}_{k,t} \ge r^{(n)}_{k,t}$, the bidder must choose the smallest possible bid value allowed. Any bid $b^{(n)}_{k,t} > r^{(n)}_{k,t}$ would unnecessarily reduce the surplus, since the valuation $v_{n,k}$ is fixed. Therefore, for any chosen bundle $S$, the optimal bid for each item $k \in S$ is exactly the minimum requirement:    $b^{(n)*}_{k,t} = r^{(n)}_{k,t}, ~ \forall k \in S$.
Substituting $b^{(n)}_{k,t} = r^{(n)}_{k,t}$ into the objective function of \eqref{p1}, we obtain that $\sum_{k \in S} (v_{n,k} - r^{(n)}_{k,t})$. Consequently, maximizing the surplus in \eqref{p1} is equivalent to selecting the optimal bundle $S$ in \eqref{p2}.    
\end{proof}
The bidder then places bids on all items in the optimal bundle $S_{n,t}^*$ for which they are not currently the standing high bidder.

% {\color{red}Auction procedure and math model for anticipated results. Put simple simulatneous ascending auction here and you can comment that the power network constraints and probabilistic constraints will be considered in the future research.

% \textbf{Assumptions:} We assume the  

% \textbf{Key methodology:} While testing   }

\subsection{Generative Agent Testbed}

In this subsection, we will specify how to construct the LLM agent in this auction environment. Firstly, the module loads bidder identities, product specifications (e.g., node-specific withdrawal rights), and the private valuations for individual products and bundles assigned to each bidder. Global auction parameters—including starting prices and minimum bid increments ($\delta$)—are then established.

In each iteration of the SAA, the module invokes the LLM agents as active bidders. For the simulation environment, we utilize gpt-oss-120b, a model optimized for high-reasoning use cases, as the core engine for all bidding agents. Each agent is provided with a structured prompt that contains the current market state and their specific strategic objectives. We define three distinct agent architectures based on their decision-making horizons:

\begin{itemize}
    \item {\emph{Rule-Centric Agent}}: Operates strictly according to the fundamental auction rules of the auction. This agent acts as a basic model, submitting bids based solely on its private valuations and current price levels, without broader strategic consideration. The details are provided in Appendix \ref{sec:promptdesign} \textit{Prompt for the {Rule-Centric Agent}}. 
    
 \item{\emph{Myopic-Profit Agent}}: Focuses on maximizing immediate utility within the subsequent bidding round. Its primary objective is to secure the highest possible profit for the next round, adhering to auction rules while prioritizing short-term gains. The details are provided in Appendix \ref{sec:promptdesign} \textit{Prompt for Myopic-Profit Agent}. 

 \item{\emph{Strategic-Outcome Agent}}: Prioritizes long-term auction outcome and terminal profit. This agent focuses on future profits, and it may strategically forgo gains or accept losses in the current round to secure a more favorable outcome in the end. The details are provided in Appendix \ref{sec:promptdesign} \textit{Prompt for Strategic-Outcome Agent}. 

\end{itemize}

 In each round, the LLM agents generate a multi-faceted decision output comprising a bid vector, a target subset of products, and a structured reasoning sequence. This reasoning follows the strict TARJ format (Fig.~\ref{fig:TARJ-ICL}), which prompts the model to articulate its Thoughts, Action, Reflection, and Journal for the current iteration. This output serves a dual purpose:
 \begin{itemize}
     \item Internal State Management: The "Journal" component is fed back into the agent to maintain continuity and inform decision-making in subsequent rounds.
     \item  Market Clearing: The bid vector is processed by the auction mechanism, which validates bids and updates the price and provisional winner for each product.
 \end{itemize}

To resolve identical high bids, the mechanism employs a random tie-breaking rule. The auction concludes when a round yields no changes in prices or high bidders across all products. This termination criterion ensures the final allocation is a stable equilibrium under the prescribed increment-constrained bidding dynamics.

% \textcolor{red}{@Shaoze and Justin: When we embedded the short-term profit hint in the prompt, can I say that this is also one category of in-context learning (ICL)? Could you confirm this? If so, please write a subsection about in-context learning and how it's related to the prompt design. If not, then please briefly explain the difference between this prompt design and in-context learning. Draft of ICL is in Sec.~\ref{sec:ICL}.}

\subsection{In-Context Learning (ICL)}
As mentioned in Section \ref{sec:ICL}, ICL modules allow LLM models to modify their behavior based on examples and information in the prompt without updating model parameters. Here, by instructing the agents to use the journal to stay consistent with their own past strategy and inform their decision-making processes, the prompt turns these traces from the prior rounds into inference-time training signals, shaping the next bid without any parameter updates. With this ICL module, the prompt stabilizes the agents' behavior across rounds and enables consistent strategic adaptation under the same mechanism rules, making the LLM a more faithful approximation of a bidder with experience rather than a stateless one-shot responder. Furthermore, the prompts for \emph{Myopic-Profit} and \emph{Strategic-Outcome agents} force specific objectives, instructions, and contextual information for the agents, conditioning the agents to learn and formulate their bidding behaviors from context. This ICL module in the prompts, in return, encourages coherent inter-round planning, reduces erratic bidding, and makes the agents behave more like an adaptive economic agent whose policy evolves within the context of the ongoing auction.

\subsection{Simulation Results}
This experiment considers the auctioning of data center access rights in two geographical sites, denoted as A and B. We assume a duopolistic market structure featuring two high-tech enterprises. Their respective valuations for the access rights in each location are summarized in  Table \ref{tab1}. 
The numerical values presented in Table \ref{tab1} are expressed in millions. For instance, the value of bidder 1 for Product A corresponds to 4 million. For the sake of brevity, we omit the specific unit in the subsequent discussion. 

Using the valuations in Table \ref{tab1} and different prompts shown in Appendix~\ref{sec:promptdesign}, we run 30 Monte Carlo simulations for each agent to capture variability in both bids and LLM output. After the Monte Carlo simulations, we calculate the average bid results in each round and compare the bidding trajectories from the different agents with the \emph{Straightforward Bidding Strategy}. Our simulation results are shown in Fig.~\ref{fig:SAAsim}. Our observations of these simulations results are three-fold.

\begin{table}
\caption{Value Parameter Settings for SAA}
\begin{center}
\begin{tabular}{cccc}
\hline
\textbf{Table}&\multicolumn{3}{c}{\textbf{Bidders' true value of products}} \\
\cline{2-4} 
\textbf{Head} & {{A}}& {{B}}& {{A+B}} \\
\hline
Bidder 1& 4& 6 &  10\\

Bidder 2& 8& 4 &  12\\
\hline
\end{tabular}
\label{tab1}
\end{center}
  \vspace{-3em}
\end{table}

First, as illustrated in Fig.~\ref{fig:SAAsim}, the bidding behavior of the \emph{Myopic-Profit Agent} aligns closely with the benchmark: the \emph{Straightforward Bidding Strategy}.  It should be noted that the trajectories presented in Fig.~\ref{fig:SAAsim} represent the mean behavior averaged over 30 independent simulations. In the majority of instances, the agent’s trajectory overlaps significantly with the theoretical path, converging toward an identical terminal price. The small difference between two trajectories in Figure ~\ref{fig:SAAsim} comes from the stochastic variance in some LLM's outputs. This strong alignment is expected, as the \emph{Myopic-Profit Agent}  and the \emph{Straightforward Bidding Strategy} share the same objective function:  profit maximization in the immediate subsequent round.  

Second, the \emph{Rule-Centric Agent} demonstrates significantly more aggressive bidding behavior compared to both the tailored agents and the benchmark. In the absence of strategic Chain-of-Thought (CoT) guidance, the \emph{Rule-Centric Agent} consistently places bids exceeding the average of the other agents and the \emph{Straightforward Bidding Strategy}. Notably, Fig.~\ref{fig:SAAsim}(c) captures specific instances of irrational escalation, where the agent continues to bid even after competitors have withdrawn. Analysis of the agent's journals reveal that this behavior stems from a misinterpreted drive to "secure" the product, prioritizing winning over cost-efficiency.

Third, the \emph{Strategic-Outcome Agent} exhibits a more aggressive bidding profile relative to both the \emph{Myopic-Profit Agent} and the \emph{Straightforward Bidding Strategy}. This increased intensity is strategically motivated: the agent utilizes higher bids in earlier rounds to secure a dominant position for its desired products in the final allocation phase. However, the \emph{Strategic-Outcome Agent} remains more conservative than the \emph{Rule-Centric Agent}. This moderation is attributed to the instruction-based ICL modules, which compel the agent to maximize the terminal utility rather than winning at any cost. By framing the objective around final profitability, the ICL module successfully mitigates the irrational bidding escalations observed in the \emph{Rule-Centric Agent}. These findings confirm the efficacy of the ICL framework in maintaining strategic consistency; the agents adhered to their prescribed behavioral personas throughout the entire auction process without lapsing into erratic or stochastic bidding patterns.

% Second, without any hints, the \emph{Rule-Centric Agent} is bidding more aggressively compared to other agents and the math model benchmark (Straightforward Bidding Strategy). As shown in Fig.~\ref{fig:SAAsim}, all of the average bids from the \emph{Rule-Centric Agent} are consistently higher than the average bids from other agents and model. Furthermore, as shown in Fig.~\ref{fig:SAAsim}(c), the \emph{Rule-Centric Agent} shows cases where the agent bids after the other bidder drops out of the auction. This decision was made to maintain the lead from the other bidders and secure the remaining product with the best profit available to oneself, according to the agent's history and journal entries. Such aggressive bidding behavior from the agent implies that LLM agents are capable of simulating not only the optimal mathematical strategies but also the suboptimal, decision-driven strategies, which is closer to the cases in the real world.  

% Third, the \emph{Myopic-Profit Agent} consistently bids closer to the math model, whereas the \emph{Strategic-Outcome Agent} bids closer to the \emph{Rule-Centric Agent}'s bids. Such results show that myopic short-term bidding behaviors face more constraints in their decision process than strategic long-term bidding behaviors do. The results also confirm the power of the ICL modules used in the agents, showing consistency in their bidding strategies throughout the auction without any signs of erratic bidding. 

\begin{figure}
    \centering
    \includegraphics[width=1\linewidth]{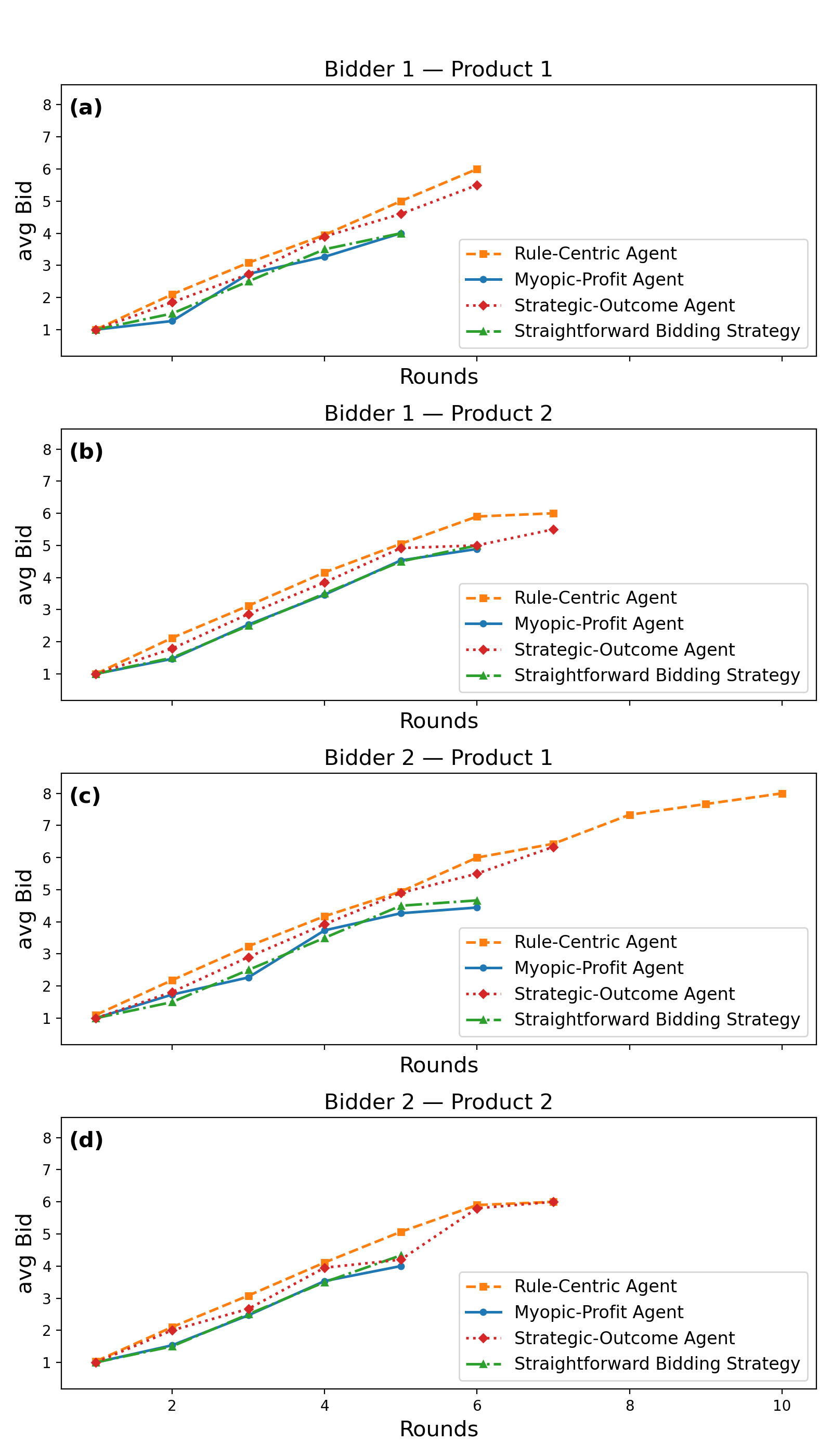}
    \caption{Comparing bidding process/behavior of generative agents and \emph{Straightforward Bidding Strategy}}
    \label{fig:SAAsim}
\end{figure}

\section{Conclusion}

This paper presents two proof-of-concept experiments to explore (1) the effectiveness of in-context learning (ICL) for shaping LLM-based behavioral generative agent decisions, and (2) the ability of LLMs to represent diverse decision-making personas and processes.    

Our results show that LLM-based agents can provide useful insights at both macro and micro levels. At the macro level, simulating multiple LLM agents helps identify different bidding strategies in power auctions that can inform electricity pricing and energy policy. We find the bidding trajectory of \emph{Myopic-Profit Agent} closely replicates that of the \emph{Straightforward Bidding Strategy}, with its average bids converging to nearly the same terminal values. The \emph{Strategic-Outcome Agent} exhibits a more aggressive bidding profile compared to both the \emph{Myopic-Profit Agent} and the \emph{Straightforward Bidding Strategy}. This behavior is driven by a long-term strategic orientation: the agent proactively utilizes higher bids in earlier rounds to establish a dominant market position. Despite this assertiveness, the agent remains more conservative than the \emph{Rule-Centric Agent} because while the \emph{Rule-Centric Agent} may pursue winning at any cost, the \emph{Strategic-Outcome Agent}’s behavior is moderated by a fundamental requirement for profitability. Together, these results highlight both the diversity of LLM agents and the effectiveness of the ICL modules in maintaining coherent, non-erratic strategies.

At the micro level, LLM agents can support personalized decisions in smart home energy systems—for example, guiding battery charging and discharging to balance energy reserves and economic benefits under net energy metering. Furthermore, we find that ICL is effective at transferring behavioral patterns from a stronger model to a smaller LLM, shaping decisions in rare-event scenarios such as blackouts.

% The results of the SAA simulation with different LLM agents further show how capable the agents are in simulating different behaviors with different contexts and objectives. Figure ~\ref{fig:SAAsim} shows that the \emph{Myopic-Profit Agent} closely replicates the \emph{Straightforward Bidding Strategy}, with its average bids converging to nearly the same terminal values and, in some cases, following the mathematical bidding trajectory almost exactly, demonstrating that LLM agents can reproduce optimal analytical strategies. In contrast, the \emph{Rule-Centric Agent} exhibits consistently more aggressive bidding than both the benchmark and other agents, continuing to bid after competitors exit in order to secure the item in some cases, which indicates that LLMs can also generate suboptimal, decision-driven behaviors that resemble real-world strategic deviations. Finally, while the \emph{Myopic-Profit Agent} remains conservative like \emph{Straightforward Bidding Strategy}, the \emph{Strategic-Outcome Agent} aligns more closely with the \emph{Rule-Centric Agent}, suggesting that long-term strategic reasoning allows greater flexibility than short-term profit maximization. 

While this study provides successful proof-of-concepts for the deployment of LLM agents, several avenues for further refinement remain. For instance, the current analysis is grounded in a stylized, simplified example designed to evaluate core agent logic. While these examples offer significant insight, the findings have not yet been validated against real-world market datasets or large-scale network configurations. Furthermore, ICL modules produce only modest differentiation between personas in embedding analyses at the mirco level. To resolve these limitations, we can further integrate data center withdrawal capacity constraints and local grid reliability requirements into the agents' decision-making frameworks to simulate complex real-world systems in the future research. Moreover, we can also explore the eligibility rules in SAA for the agents' prompts to allow for a more robust evaluation of the agents' ability. 

Overall, our study shows that LLM-based behavioral agents can bridge theoretical models and practical energy decision-making, while highlighting areas for future refinement.

% \textcolor{red}{[Cong: explain also the limitations of these two proof of concept experiments in the conclusion. If fig.3 takes many spaces, make the front size larger and shorten the y-axis. Why bidder 2 bid Product 1 for more rounds than bidder 1 in the \emph{Rule-Centric Agent} (orange) setting?]}

\section{Acknowledgment}
Generative AI tools were used for language editing, grammar correction, and clarity improvement. The AI tools did not contribute to the technical content, analysis, or scientific conclusions of this paper.

\section{Appendix} 
% \subsection{Eligibility and Activity Rules in SAA}
% To prevent "sniping" (waiting until the end of the auction to bid), we enforce an activity rule based on a quantity index. Each item $k$ is assigned a weight $w_k$ (in points) representing its value or capacity size. Each bidder $n$ begins with an initial eligibility $E_{n,0}$, determined by their initial deposit.

% A bidder $n$ is considered \textit{active} on item $k$ at round $t$ if they either hold the standing high bid from round $t-1$ or submit a valid new bid in the current round. Let $S_{n,t} \subseteq \mathcal{K}$ be the set of items on which bidder $n$ is active. The bidder's activity level $A_{n,t}$ is the sum of weights of these items:
% \begin{equation}
%     A_{n,t} = \sum_{k \in S_{n,t}} w_k.
% \end{equation}
% The auction enforces two constraints. First, a bidder cannot submit bids that cause their activity to exceed their current eligibility: $A_{n,t} \le E_{n,t}$. Second, eligibility is updated strictly based on activity. During auction stage $j$, a bidder must maintain activity on a fraction $f_j \in (0, 1]$ of their eligibility. If the activity $A_{n,t}$ falls below the required threshold $f_j E_{n,t}$, the eligibility for the subsequent round is reduced:
% \begin{equation}
%     E_{n, t+1} = \begin{cases} 
%     E_{n,t} & \text{if } A_{n,t} \ge f_j E_{n,t} \\
%     \frac{A_{n,t}}{f_j} & \text{if } A_{n,t} < f_j E_{n,t}
%     \end{cases}.
% \end{equation}
% This rule ensures that eligibility is monotonic non-increasing, forcing bidders to reveal their demand early.

\subsection{Prompt Designs and Additional Simulation Results}\label{sec:batteryexperiment}

We extend the prompt design in \cite{Chen25behavioral} by incorporating an explicit in-context learning (ICL) module. Specifically, we augment the original persona-based and experiment-specific prompt by inserting example decision trajectories, as illustrated below:

 \begin{tcolorbox}[
  colback=white,           % background color
  colframe=black,          % frame color
  title=In-context learning examples,
  fonttitle=\bfseries,
  fontupper=\small,        % Set the text inside the box to a smaller size
  breakable,
  enhanced,
  sharp corners
]
...
        \textbf{Here are some example responses to guide you:}
        
        \{example\_responses\}
\end{tcolorbox}

The ellipses (“…”) indicate that the remainder of the prompt remains unchanged from \cite{Chen25behavioral}, including the persona definitions, experiment setup, and blackout intervention instructions. The ICL examples are injected as an additional prompt component to guide the LLM’s dispatch decisions without altering the underlying decision framework.

\begin{figure}
    \centering
    \includegraphics[width=0.45\linewidth]{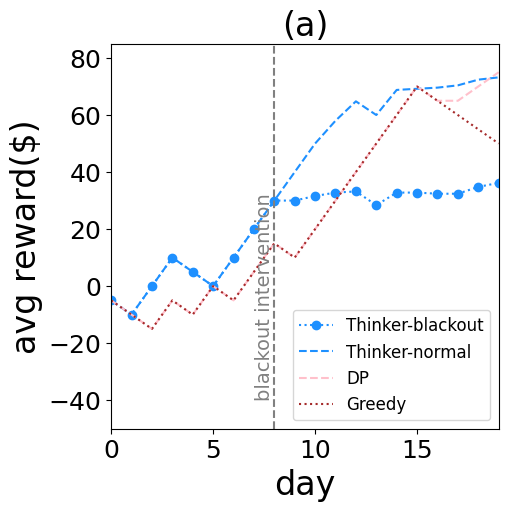}\includegraphics[width=0.45\linewidth]{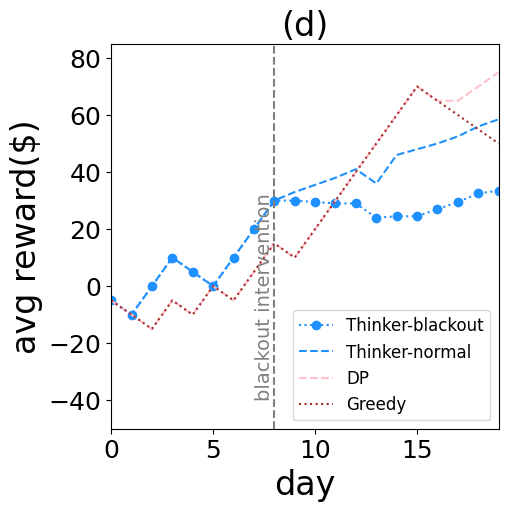}
    \includegraphics[width=0.45\linewidth]{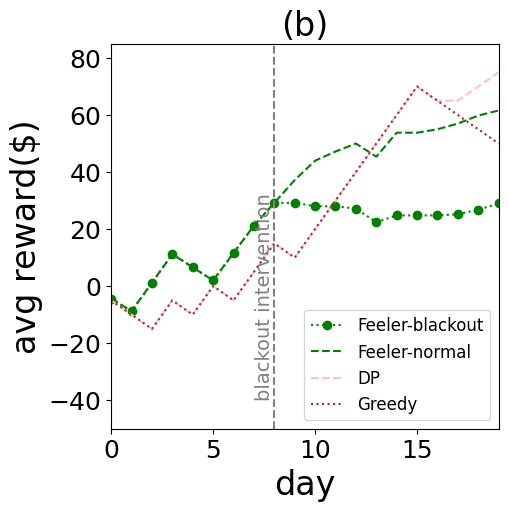}\includegraphics[width=0.45\linewidth]{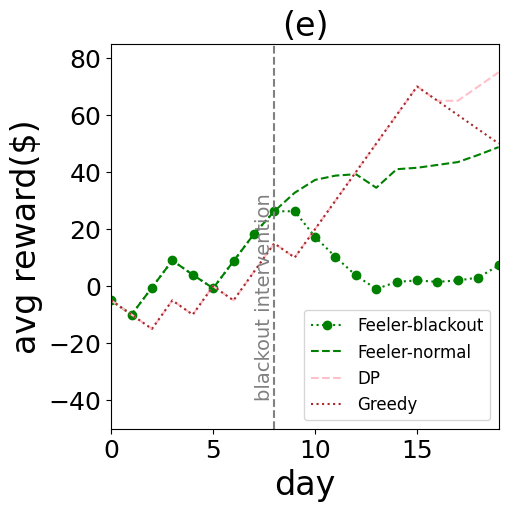}
    \includegraphics[width=0.45\linewidth]{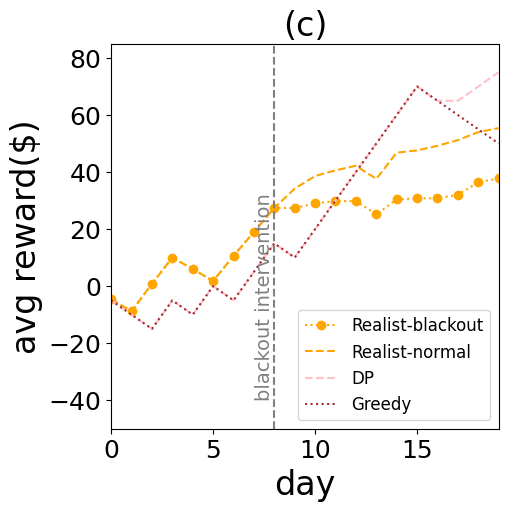}\includegraphics[width=0.45\linewidth]{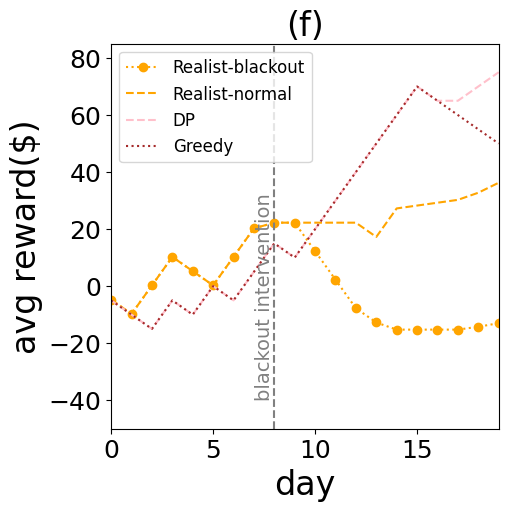}
    \caption{Average accumulated reward with blackout intervention. Left: without IC examples. Right: with ICL-blackout examples.}
    \label{fig:reward}
      \vspace{-2em}
\end{figure}

We provide additional simulation results on terminal rewards in Fig.~\ref{fig:reward}. During blackout interventions, generative agents tend to prioritize energy security and reserve availability over profit-oriented buy-low–sell-high strategies, resulting in lower accumulated rewards compared to scenarios without blackouts. These reward outcomes further reflect the limitation discussed in Sec.~\ref{sec:simbattery}. %In particular, o1-preview typically yields lower terminal rewards and higher terminal SoC for the Feeler persona, whereas gpt-5-mini exhibits the opposite pattern, although the differences are small. This discrepancy highlights the bounds of behavior transfer via in-context learning (ICL), where qualitative decision patterns are preserved but quantitative outcomes may differ.

To facilitate comparison between o1-preview and gpt-5-mini, we reproduce the o1-preview results from \cite{Chen25behavioral} in Fig.~\ref{fig:SoCo1}. As shown in Fig.~\ref{fig:SoCo1}(b–c), the Feeler and Realist personas exhibit higher terminal SoC on Day 20 under blackout interventions, indicated by the red dashed curves. The corresponding textual outputs show that, after experiencing a blackout, these personas deliberately retain more energy in the battery as backup reserves. % This behavioral pattern reappears in the right column of Fig.~\ref{fig:SoCall} when gpt-5-mini is guided by in-context examples generated by o1-preview, demonstrating effective qualitative behavior transfer via in-context learning.
\begin{figure}
    \centering
    \includegraphics[width=0.45\linewidth]{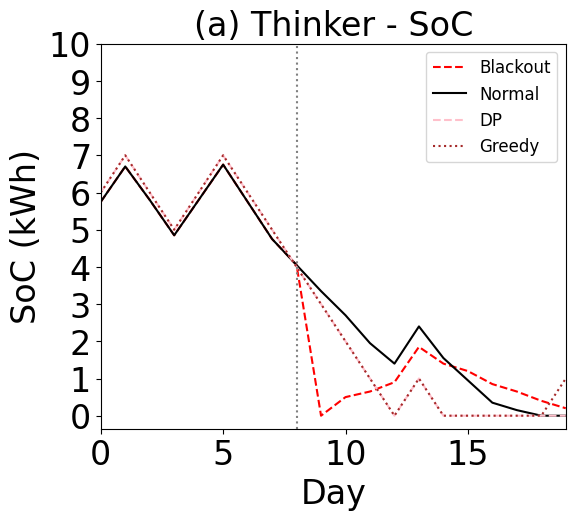}\includegraphics[width=0.45\linewidth]{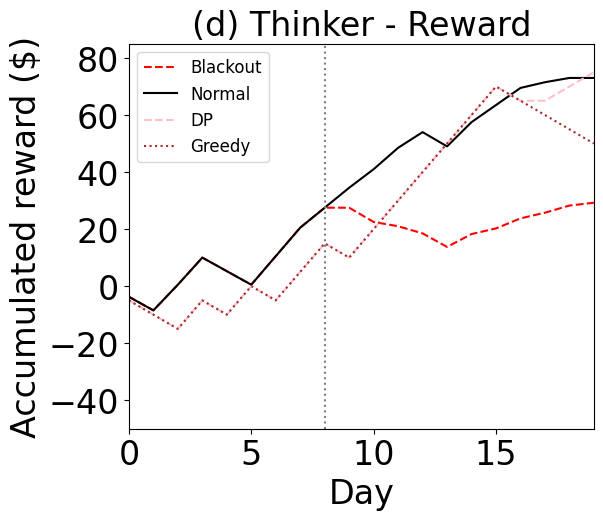}
    \includegraphics[width=0.45\linewidth]{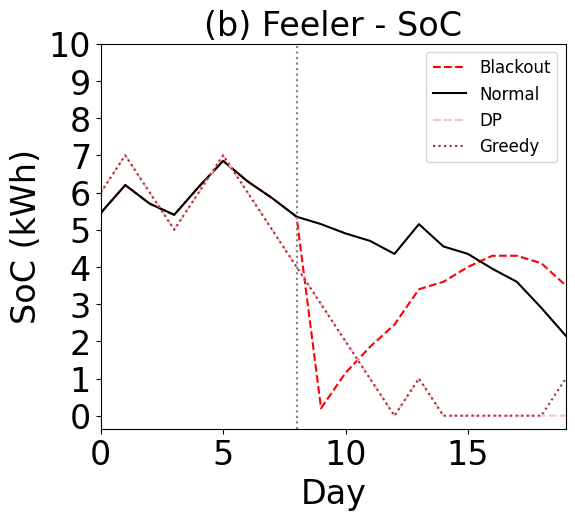}\includegraphics[width=0.45\linewidth]{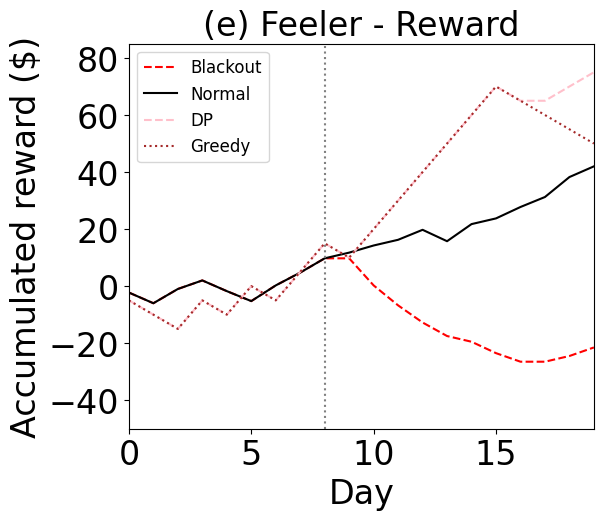}
    \includegraphics[width=0.45\linewidth]{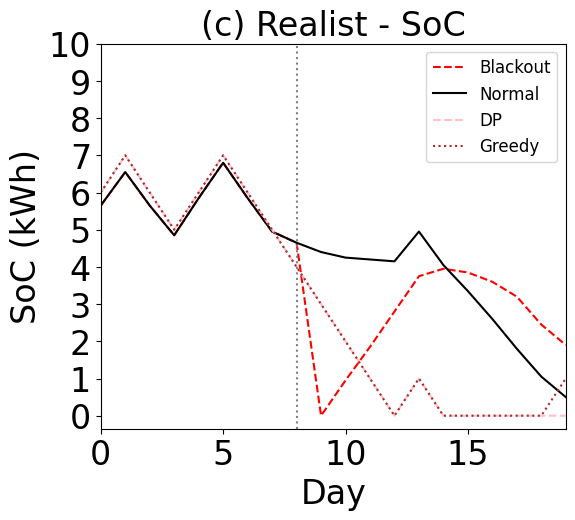}\includegraphics[width=0.45\linewidth]{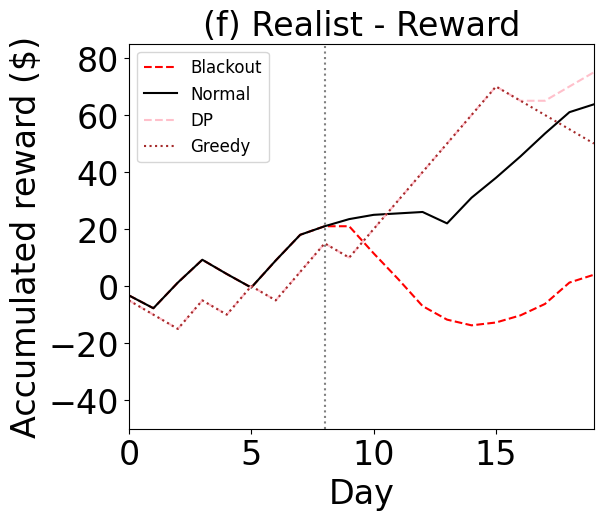}
    \caption{o1-preview results for the average SoC and accumulated reward over the 20-day experiment. \cite{Chen25behavioral}}
    \label{fig:SoCo1}
      \vspace{-2em}
\end{figure} 

\subsection{Prompt Design for SAA}\label{sec:promptdesign}
We design three distinct prompting frameworks to instantiate the LLM agents as bidders in the SAA environment: \emph{Rule-Centric, Myopic-Profit, and Strategic-Outcome}. Each agent is initialized with unique private valuations for individual products and bundles. For \emph{Rule-Centric Agent}, the LLM is tasked with decision-making for each auction round without additional strategic guidance. The prompt provides only the fundamental SAA protocols, requiring the agent to navigate the auction based on its internal reasoning.  For \emph{Myopic-Profit Agent}, in addition to the rules of SAA, we incorporate an instruction-based in-Context Learning (ICL) module designed to simulate mathematically optimized, short-term behavior. This agent is explicitly tasked with maximizing utility for the immediate round. The ICL module conditions the agent to: perform cost-benefit analyses by comparing valuation against cost and choose product combinations for immediate payoff. Moreover, we give detailed instructions for certain cases, such as bidding the current price when the agent owns the product from the previous round. Through these instructions, the LLM treats the prompt as a dynamic operational context, adapting its bidding logic to mimic a myopic optimizer. The \emph{Strategic-Outcome Agent} is designed for long-term profit maximization, prioritizing terminal utility. The instruction-based ICL module includes consideration of current prices and increments and comparison between values and cost, which is the same as the \emph{Myopic-Profit Agent}. Here, however, the module direct the agent to evaluate how current bids influence future market states, price trajectories, and product availability. Notably, this module is given more strategic latitude; unlike the Myopic agent, it is not restricted by specific bidding instructions for bidding behavior.

 \begin{tcolorbox}[
  colback=white,           % background color
  colframe=black,          % frame color
  title=Prompt for \emph{Rule-Centric Agent},
  fonttitle=\bfseries,
  fontupper=\small,        % Set the text inside the box to a smaller size
  breakable,
  enhanced,
  sharp corners
]

You are bidder {bidder} in a SAA.

======================  \\
\textbf{Current Auction State}   \\
======================  \\
Products available: \{products\_str\}   \\
Current prices: \{prices\}              \\
Current high bidders: \{high\_bidders\} \\
Minimum allowed price increase for challenges: \{min\_inc\}
======================  \\
\textbf{YOUR PAST EXPERIENCE}    \\
======================  \\
Here is a short summary of what you have done so far in previous rounds: \{history\_text\}

Here is your private journal from earlier rounds: \{journal\_text\}

Use this information to stay consistent with your own past strategy and learning.

======================  \\
\textbf{YOUR VALUATIONS}         \\
======================  \\
Below is a table showing how much you value each individual product and, when available, some bundles of products together.

Keys that look like ["Product 1", "Product 2"] are bundle values for owning that combination of items.

\{val\_json\_str\}

======================  \\
\textbf{Auction Rule}            \\
======================  \\
The auction proceeds in rounds. In each round, every bidder simultaneously submits bids on any subset of products. 

For each product, there is a current price \{prices\} for products and a current provisional high bidder \{high\_bidders\}.
    
In each round, each bidder submits a bid for a product.

Whoever bids the highest bid becomes the provisional high bidder and claims the product, and the highest bid becomes the new price of the product for the next round. 

If two bidders submit the same highest valid bid, the winner is chosen randomly (fair tie-breaking). 

Products with no valid bids retain their current price and high bidder.
    
To enter the auction in round 1, you MUST pay at least the minimum increment to the initial price given. 
The auction continues until a round occurs where:
\begin{itemize}
    \item No prices change, and
    \item No high bidders change across all products.
\end{itemize}
When this happens, the auction terminates.

======================  \\
\textbf{Your Role}               \\
======================  \\
For each round, you have to decide whether to bid or not for a given product. 
If you decide to bid, then you have to decide how much to bid for each product based on the valuation table \{val\_json\_str\} and history \{history\_text\}. 

======================      \\
TARJ OUTPUT FORMAT (STRICT) \\
======================      \\
Report your internal reasoning and your decision using the following fields, in order:

\textbf{Thoughts:} (In at most 50 words, describe your internal reasoning, how you compared different combinations, and why you chose the final combination.)

\textbf{Action:} (In at most 50 words, describe in natural language what you are doing this round: which combination you are effectively demanding and how you are bidding.)

ChosenSubset: \{\{comma-separated product names in the combination you chose;
leave empty if you decided to demand nothing\}\}
\{bid\_lines\}

\textbf{Reflection:} (In at most 50 words, reflect on whether this decision fits your overall goals and how the current prices and standings affect your strategy.)

\textbf{Journal:} (In at most 50 words, write a short journal entry to your future self about what you did this round and what you might watch for in future rounds.)
\end{tcolorbox}

% \textcolor{red}{[Cong: if some prompt parts are the same, you can just use the ellipses (“…”) and explain that the omitted part indicates prompt unchanged compared with the rule-centric agent.]}

\begin{tcolorbox}[
  colback=white,
  colframe=black,
  title=Prompt for \emph{Myopic-Profit Agent},
  fonttitle=\bfseries,
  fontupper=\small,
  breakable,
  enhanced,
  sharp corners
]
\textit{(Note: The omitted parts [“...”] indicate prompt content that remains unchanged compared to the \emph{Rule-Centric Agent} prompt. We omit them due to page limitations.)}

You are bidder \{bidder\} in a SAA.

\textbf{... }

{Current Auction State, Past Experience, and Valuations sections are identical to the \emph{Rule-Centric Agent}} 

\textbf{... }

======================  \\
HOW YOU SHOULD DECIDE   \\
======================  \\
In this round, you must decide which combination of products you want to try to end up with and how you want to bid on each product.

Think carefully about:
\begin{itemize}
    \item All the reasonable combinations of products you could aim for (including possibly choosing no products at all).
    \item For each combination, how much total value it has for you, using the table \{val\_json\_str\} 
    (use the bundle value if it is explicitly given; otherwise add up the values of the individual products in that combination).
    \item What you would likely have to pay for that combination, considering:
    \begin{itemize}
        \item the current price of each product,
        \item whether you are already the provisional winner on that product,
        \item whether someone else is currently winning and you would need to raise the price by at least the minimum increment to challenge.
    \end{itemize}
    \item The resulting utility for each combination (value minus the cost you would reasonably expect to pay).
\end{itemize}

Choose the combination that gives you the highest utility and is reasonable given your overall strategy from previous rounds. 

If every combination seems worse than doing nothing, you may choose to demand no products this round.

If several combinations have the same utility, it is acceptable to prefer a combination with more products. 

If this is the first round, you MUST pay at least the minimum increment to the initial price given to start and stay in the auction. 

Once you decide which combination you are aiming for in this round:
\begin{itemize}
    \item For each product that is NOT part of your chosen combination, you should not bid on it (output "none" for that product).
    \item For each product that IS part of your chosen combination:
        \begin{itemize}
            \item If no one is currently the high bidder, it is usually enough to bid at the current price.
            \item If you are already the high bidder, you can usually keep it by bidding at the current price again.
            \item If another bidder is currently the high bidder, you must be willing to offer at least the current price plus the minimum increment to challenge them.
        \end{itemize}
\end{itemize}

\textbf{... [TARJ Output Format is identical to the \emph{Rule-Centric Agent}] ...}

\end{tcolorbox}

% ==========================================
% BOX 3: STRATEGIC-OUTCOME AGENT (ABBREVIATED)
% ==========================================
\begin{tcolorbox}[
  colback=white,
  colframe=black,
  title=Prompt for \emph{Strategic-Outcome Agent},
  fonttitle=\bfseries,
  fontupper=\small,
  breakable,
  enhanced,
  sharp corners
]
\textit{(Note: The omitted parts [“...”] indicate prompt content that remains unchanged compared to the \emph{Rule-Centric Agent} prompt. We omit them due to page limitation.)}

You are bidder \{bidder\} in a SAA.

\textbf{... }

{Current Auction State, Past Experience, and Valuations sections are identical to the \emph{Rule-Centric Agent}} 

\textbf{... }

======================  \\
HOW YOU SHOULD DECIDE   \\
======================  \\
In this round, you must decide which combination of products you want to try to end up with and how you want to bid on each product.

Think carefully about:
\begin{itemize}
    \item All the reasonable combinations of products you could aim for (including possibly choosing no products at all).
    \item For each combination, how much total value it has for you, using the table \{val\_json\_str\} 
    (use the bundle value if it is explicitly given; otherwise add up the values of the individual products in that combination).
    \item What you would likely have to pay for that combination, considering:
    \begin{itemize}
        \item the current price of each product,
        \item whether you are already the provisional winner on that product,
        \item whether someone else is currently winning and you would need to raise the price by at least the minimum increment to challenge.
    \end{itemize}
    \item The resulting utility for each combination (value minus the cost you would reasonably expect to pay).
    \item How the current combination will affect the future combinations, prices, and utility in the upcoming rounds.
\end{itemize}

Choose the combination that gives you the highest utility and is reasonable given your overall strategy from previous rounds. 

If every combination seems worse than doing nothing, you may choose to demand no products this round.

If several combinations have the same utility, it is acceptable to prefer a combination with more products. 

If this is the first round, you MUST pay at least the minimum increment to the initial price given to start and stay in the auction. 

Once you decide which combination you are aiming for in this round:
\begin{itemize}
    \item For each product that is NOT part of your chosen combination, you should not bid on it (output "none" for that product).
    \item For each product that IS part of your chosen combination:
        \begin{itemize}
            \item If another bidder is currently the high bidder, you must be willing to offer at least the current price plus the minimum increment to challenge them.
        \end{itemize}
\end{itemize}

\textbf{... [TARJ Output Format is identical to the \emph{Rule-Centric Agent}] ...}

\end{tcolorbox}

\bibliographystyle{ieeetr}
\bibliography{references}

\begin{thebibliography}{10}

\bibitem{park25GenerativeAgents}
J.~S. Park, {\em Generative Agent Simulations of Human Behavior}.
\newblock PhD thesis, Stanford University, 2025.
\newblock ProQuest Dissertations \& Theses, No. 32316468.

\bibitem{horton2023large}
J.~J. Horton, ``Large language models as simulated economic agents: What can we learn from homo silicus?,'' tech. rep., National Bureau of Economic Research, 2023.

\bibitem{aher2023using}
G.~V. Aher, R.~I. Arriaga, and A.~T. Kalai, ``Using large language models to simulate multiple humans and replicate human subject studies,'' in {\em International conference on machine learning}, pp.~337--371, PMLR, 2023.

\bibitem{ArgyleEtAl2023OutOfOneMany}
L.~P. Argyle, E.~C. Busby, N.~Fulda, J.~Gubler, C.~Rytting, and D.~Wingate, ``Out of one, many: Using language models to simulate human samples,'' {\em Political Analysis}, vol.~31, no.~3, pp.~337--351, 2023.

\bibitem{RossKimLo2024LLMEconomicus}
J.~Ross, Y.~Kim, and A.~W. Lo, ``{LLM} economicus? mapping the behavioral biases of {LLM}s via utility theory,'' {\em arXiv preprint}, 2024.

\bibitem{Leng2024MentalAccounting}
Y.~Leng, ``Can {LLM}s mimic human-like mental accounting and behavioral biases?,'' {\em SSRN Working Paper}, 2024.
\newblock SSRN 4705130.

\bibitem{akata2025playing}
E.~Akata, L.~Schulz, J.~Coda-Forno, S.~J. Oh, M.~Bethge, and E.~Schulz, ``Playing repeated games with large language models,'' {\em Nature Human Behaviour}, pp.~1--11, 2025.

\bibitem{fontana2025nicer}
N.~Fontana, F.~Pierri, and L.~M. Aiello, ``Nicer than humans: How do large language models behave in the prisoner's dilemma?,'' in {\em Proceedings of the International AAAI Conference on Web and Social Media}, vol.~19, pp.~522--535, 2025.

\bibitem{duan2024gtbench}
J.~Duan, R.~Zhang, J.~Diffenderfer, B.~Kailkhura, L.~Sun, E.~Stengel-Eskin, M.~Bansal, T.~Chen, and K.~Xu, ``Gtbench: Uncovering the strategic reasoning capabilities of llms via game-theoretic evaluations,'' {\em Advances in Neural Information Processing Systems}, vol.~37, pp.~28219--28253, 2024.

\bibitem{GoodyearGuoJohari2025StateRep}
L.~Goodyear, R.~Guo, and R.~Johari, ``The effect of state representation on {LLM} agent behavior in dynamic routing games,'' {\em arXiv preprint}, 2025.

\bibitem{brown2020language}
T.~Brown, B.~Mann, N.~Ryder, M.~Subbiah, J.~D. Kaplan, P.~Dhariwal, A.~Neelakantan, P.~Shyam, G.~Sastry, A.~Askell, {\em et~al.}, ``Language models are few-shot learners,'' {\em Advances in neural information processing systems}, vol.~33, pp.~1877--1901, 2020.

\bibitem{wei2022chain}
J.~Wei, X.~Wang, D.~Schuurmans, M.~Bosma, F.~Xia, E.~Chi, Q.~V. Le, D.~Zhou, {\em et~al.}, ``Chain-of-thought prompting elicits reasoning in large language models,'' {\em Advances in neural information processing systems}, vol.~35, pp.~24824--24837, 2022.

\bibitem{AgrawalYucel2022DRPrograms}
V.~V. Agrawal and {\c{S}}.~Y{\"u}cel, ``Design of electricity demand-response programs,'' {\em Management Science}, vol.~68, no.~10, pp.~7441--7456, 2022.

\bibitem{GaoAlshehriBirge2024DERMarketPower}
Z.~Gao, K.~Alshehri, and J.~R. Birge, ``Aggregating distributed energy resources: Efficiency and market power,'' {\em Manufacturing \& Service Operations Management}, vol.~26, no.~3, pp.~834--852, 2024.

\bibitem{Chen25behavioral}
C.~Chen, O.~Karaduman, and X.~Kuang, ``Behavioral generative agents for energy operations,'' {\em arXiv preprint arXiv:2506.12664}, 2025.

\bibitem{FilippasHortonManning2024HomoSilicusEC}
A.~Filippas, J.~J. Horton, and B.~S. Manning, ``Large language models as simulated economic agents: What can we learn from homo silicus?,'' in {\em Proceedings of the 25th ACM Conference on Economics and Computation (EC)}, pp.~614--615, 2024.

\bibitem{ShanahanMcDonellReynolds2023RolePlay}
M.~Shanahan, K.~McDonell, and L.~Reynolds, ``Role play with large language models,'' {\em Nature}, vol.~623, no.~7987, pp.~493--498, 2023.

\bibitem{bertsekas2012dynamic}
D.~Bertsekas, {\em Dynamic programming and optimal control: Volume I}, vol.~4.
\newblock Athena scientific, 2012.

\bibitem{JiangPowell15Storage}
D.~R. Jiang and W.~B. Powell, ``Optimal hour-ahead bidding in the real-time electricity market with battery storage using approximate dynamic programming,'' {\em INFORMS Journal on Computing}, vol.~27, no.~3, pp.~525--543, 2015.

\end{thebibliography}
\balance

\endgroup
\end{document}